\def\01{\{0,1\}}
\newcommand{\ceil}[1]{\lceil{#1}\rceil}
\newcommand{\ket}[1]{|#1\rangle}
\newcommand{\ketbra}[2]{|#1\rangle\langle#2|}
\newcommand{\Tr}{\mbox{\rm Tr}}
\newcommand{\E}{\mathbb{E}}
\newcommand{\Exp}{\mathbb{E}}
\newtheorem{definition}{Definition}
\newtheorem{theorem}{Theorem}
\newtheorem{conjecture}{Conjecture}
\newtheorem{fact}{Fact}
\newtheorem{remark}{Remark}
\newcommand{\I}{{I}}
\newcommand{\id}{\mathbb{I}}
\newcommand {\br} [1] {\ensuremath{ \left( #1 \right) }}
\newcommand {\minusspace} {\: \! \!}
\newcommand {\fn} [2] {\ensuremath{ #1 \minusspace \br{ #2 } }}
\newcommand{\defeq}{\ensuremath{ \stackrel{\mathrm{def}}{=} }}
\newcommand{\suppress}[1]{}
\newcommand{\dmax}[2]{\fn{{D}_{\max}}{#1 \middle\| #2}}
\newcommand {\imax}{\ensuremath{{I}_{\max}}}
\newcommand{\mI}{\imax}
\def\cH{\mathcal{H}}
\DeclareMathOperator{\supp}{supp}
\def\cX{\mathcal{X}}
\def\cY{\mathcal{Y}}
\def\nn{\nonumber}
\newenvironment{proof}
{\noindent {\bf Proof. }}
{{\hfill $\Box$}\\
 \smallskip}
\begin{document}

\title{Getting almost all the bits from a quantum random access code}
\author{Han-Hsuan Lin\thanks{Department of Computer Science, National Tsing Hua University,  Hsinchu 30013, Taiwan. Supported by NSTC QC project under Grant no.\  111-2119-M-001-006-. {\tt linhh@cs.nthu.edu.tw}}
\and
Ronald de Wolf\thanks{QuSoft, CWI and University of Amsterdam, the Netherlands. Partially supported by the Dutch Research Council (NWO) through Gravitation-grant Quantum Software Consortium, 024.003.037. {\tt rdewolf@cwi.nl}}
}
\maketitle

\begin{abstract}
A quantum random access code (QRAC) is a map $x\mapsto\rho_x$ that encodes $n$-bit strings $x$ into $m$-qubit quantum states $\rho_x$, in a way that allows us to recover any one bit of $x$ with success probability $\geq p$. The measurement on $\rho_x$ that is used to recover, say, $x_1$ may destroy all the information about the other bits; this is in fact what happens in the well-known QRAC that encodes $n=2$ bits into $m=1$ qubits. Does this generalize to large $n$, i.e., could there exist QRACs that are so ``obfuscated'' that one cannot get much more than one bit out of them? Here we show that this is not the case: for every QRAC there exists a measurement that (with high probability) recovers the full $n$-bit string $x$ up to small Hamming distance, even for the worst-case~$x$.
\end{abstract}

\section{Introduction}

\subsection{Quantum Random Access Codes}

Quantum states of $m$ qubits are described by $2^m$ complex amplitudes, and it is tempting to try to use this exponential ``space'' to encode a lot of classical information in a quantum state. Holevo's theorem~\cite{holevo} says that in general we cannot use this exponential space: if we encode some classical random variable $X$ in an $m$-qubit quantum state $\rho_X$ (via a map $x\mapsto\rho_x$), and then do a measurement on $\rho_X$ to produce a classical random variable $Y$, then the mutual information between $X$ and $Y$ is at most $m$. So we cannot cram more than $m$ bits of information into $m$ qubits. 

\emph{Quantum random access} codes, first introduced in~\cite{ambainis:rac}, address a more subtle situation where we demand less from the decoding: here $x\in\01^n$ and we only want to be able to recover the bit $x_i$ of $x$ from $\rho_x$ with success probability~$p>1/2$, for any $i\in[n]$ of our choice. 
Note that this is different from recovering \emph{all} bits of $x$, since the measurement that recovers $x_1$ can change the $m$-qubit state, making it harder (or even impossible) to recover other bits of~$x$. More formally:

\begin{definition}
An $(n,m,p)$-quantum random access code (QRAC) is a map $x\mapsto\rho_x$ from the set of $n$-bit strings to the set of $m$-qubit mixed states,
such that for each $i\in[n]$ there exists
a 2-outcome POVM  $\{M_0,M_1\}$ such that for each $x\in\01^n$ and $b\in \01$
we have $\Tr(M_b\rho_x)\geq p$ if $x_i=b$.
\end{definition}

For example, we can encode $n=2$ classical bits into $m=1$ qubits by mapping
(for $b\in\01$) the 2-bit string $x=0b$ to the qubit $\cos(\pi/8)\ket{0}+(-1)^b\sin(\pi/8)\ket{1}$, and mapping $x=1b$ to $(-1)^b\sin(\pi/8)\ket{0}+\cos(\pi/8)\ket{1}$. See Figure~\ref{fig:2to1QRAC} for illustration.
A measurement in the computational basis recovers the first bit $x_1$ with success probability $\cos(\pi/8)^2\approx 0.85$, and a measurement in the Hadamard basis recovers $x_2$ with the same success probability.

\begin{figure}[hbt]
\centering
\setlength{\unitlength}{0.2mm}
\begin{picture}(270,160)
\put(208,45){$\ket{0}$}
\put(93,160){$\ket{1}$}
\put(100,50){\vector(1,3){32}}
\put(100,50){\vector(-1,3){32}}
\put(100,50){\vector(0,1){100}}
\put(100,50){\vector(1,0){100}}
\put(100,50){\vector(3,1){97}}
\put(100,50){\vector(3,-1){97}}
\put(200,85){$\rho_{00}$}
\put(200,15){$\rho_{01}$}
\put(50,160){$\rho_{11}$}
\put(130,160){$\rho_{10}$}
\end{picture}
\caption{The 4 pure states of the standard 2-to-1 QRAC}\label{fig:2to1QRAC}
\end{figure}
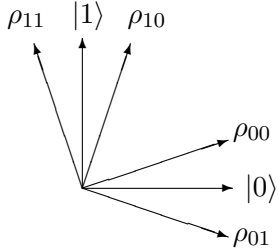

We know almost exactly how many qubits are required for an $(n,m,p)$-QRAC: Ambainis et al.~\cite{ambainis:rac} showed that $m\leq (1-H(p))n+O(\log n)$ qubits are \emph{sufficient}, even just using classical random variables (i.e., diagonal density matrices~$\rho_x$), and Nayak~\cite{nayak:qfa} proved that $m\geq (1-H(p))n$ qubits are \emph{necessary}, improving an $\Omega(n/\log n)$ lower bound from~\cite{ambainis:rac}.\footnote{Here $H(p)=-p\log_2(p)-(1-p)\log_2(1-p)$ denotes the binary entropy. The information-theoretic intuition behind Nayak's $m\geq (1-H(p))n$ lower bound is quite clear: $\rho_x$ allows us to predict $x_i$ with success probability $\geq p$, which reduces the entropy of $x_i$ from 1 bit to at most $H(p)$ bits, so gives at least $1-H(p)$ bits of information; this is true for all $n$ bits of $x$ simultaneously. The two papers \cite{ambainis:rac} and~\cite{nayak:qfa} were later combined in the journal version~\cite{ambainis:racj}.} 

QRACs have a number of  applications in quantum information, such as lower bounding the size of 1-way quantum finite automata~\cite{ambainis:rac,nayak:qfa,ambainis:racj}, 1-way quantum communication complexity~\cite{buhrman&wolf:qcclower,klauck:qpcom,gkrw:identificationj,brw:hypercontractive}, and the length of \emph{classical} locally decodable error-correcting codes~\cite{kerenidis&wolf:qldcj,brw:hypercontractive}.
They have also been used to show upper bounds, for instance that $n$-qubit density matrices can be learned (in a weak sense) from $O(n)$ measurement results~\cite{aaronson:qlearnability}.
They have been generalized to the setting of two parties (one holding $x$, the other holding $i$) who share randomness~\cite{ALMO:QRAC}, and to computing more general functions of up to $k$ of the bits of~$x$~\cite{brw:hypercontractive,DM:QRACf} rather than just recovering one bit~$x_i$.
We refer to~\cite[page 1--2]{DM:QRACf} for many more references to applications of QRACs and even experimental implementations.

\subsection{Our results: getting most of the bits of $x$ from $\rho_x$}

In a quantum random access code, the measurement needed to recover $x_1$ may significantly change the state, reducing the information we have about the other bits.
This is what happens for instance in the above 2-to-1 example (Figure~\ref{fig:2to1QRAC}): if we measure in the computational basis to recover $x_1$ and get the correct answer, then the state will have collapsed to $\ket{x_1}$, and all information about $x_2$ is lost; and if we got the wrong answer, then the post-measurement state $\ket{1-x_1}$ also has no information about $x_2$. Similarly, doing the Hadamard-basis measurement to recover $x_2$ will destroy all information about~$x_1$.
Can this ``obfuscating'' behavior be generalized? Maybe there are quantum random access codes that are so obfuscated that they allow to recover any one $x_i$ of our choice, but not many more bits? 
Such QRACs would be undesirable if one cares about extracting a lot of information about the whole $x$ (rather than just one bit $x_i$), but could be very desirable if one cares about hiding information for cryptographic purposes.

Our main result in this paper is a negative answer to this question: we show that every QRAC admits a measurement that recovers \emph{most} of the bits of $x$ correctly from one copy of $\rho_x$, in the sense of being able to obtain an $n$-bit string $y$ that has small Hamming distance with the encoded~$x$.

We prove this first in an average-case setting (Section~\ref{sec:avcase}), where we have some probability distribution $\{P_x\}$ on $x\in\01^n$, and then (in Section~\ref{sec:worstcase}) use the minimax theorem of game theory to extend the result to a worst-case setting, i.e., we obtain a measurement that works for every $x$.
The measurement that proves our result in the average-case setting is simple and well-known: it is the so-called Pretty Good Measurement (PGM) for identifying $x$ from $\rho_x$.
The PGM measures one copy of $\rho_x$ and produces a measurement outcome $y\in\01^n$.
The PGM cannot actually be very good at the task of identifying the whole $n$-bit $x$ exactly: another bound due to Nayak~\cite{nayak:qfa,ambainis:racj} says that the (uniform) average probability of fully identifying $x$ from $\rho_x$ (i.e., to have $y=x$) is at most $2^{m-n}$, which is tiny unless $m$ is almost as large as $n$.\footnote{\label{ft:Nayak}This result has a very elegant one-line proof: with POVM elements $\{Q_y\}$ (summing to identity on the $m$-qubit space), we have $\sum_x \Tr(Q_x\rho_x)\leq \sum_x \Tr(Q_x)=\Tr(\sum_x Q_x)=\Tr(I)=2^m$.} 
However, we prove here that for every possible QRAC and every distribution $\{P_x\}$, the PGM is actually very good at getting an individual $x_i$ correct (i.e., $x_i=y_i$ with high probability). 
Our treatment of the PGM and the induced one-bit PGM for $x_i$, is similar to how K\"onig and Terhal~\cite{konig2008bounded} show that the security of an extractor with one-bit output w.r.t.\ quantum side-information about~$x$ is roughly similar to its security w.r.t.\ classical side-information; similar techniques were also applied to one-way communication complexity under product distributions~\cite{boddu2023relating}.  
The above was all about recovering one bit $x_i$, but once we know that the PGM gets each bit $x_i$ correct with good probability, 
linearity of expectation then implies that (on average) the PGM actually gets \emph{most} of the bits of~$x$ correct simultaneously: $x$ and $y$ will differ in only a small fraction of the $i\in[n]$.

In Section~\ref{sec:lb} we use the previous result to recover a weaker version of Nayak's general lower bound on the number of qubits in a QRAC.
Lastly, in  Section~\ref{sec:convQRACtoRAC} we show that a QRAC can actually be converted into a classical RAC (where the states $\rho_x$ are diagonal, so they correspond to classical probability distributions), at the expense of a small increase in error probability and in message length.

\section{Preliminaries}

For matrix $A$ and $p \geq 1$, let $\| A \|_p$ denote the \emph{Schatten} $p$-norm, which is the $p$-norm of the vector of singular values of~$A$. $\| A \|_2$ is also called the Frobenius norm, denoted $\| A \|_F$. Let $\Delta(\rho , \sigma) \defeq \frac{1}{2} \|\rho - \sigma\|_1$ be the trace distance.
    We use $U(\cdot)$ to denote the uniform distribution over a set.
A two-register quantum state $\rho_{XA}$ is a classical quantum state (cq-state) if it can be written as $ \rho_{XA} = \sum_x p_x \ketbra{x}{x} \otimes \rho^x_A $
		

	


\subsection{Pretty good measurement}

We first explain the Pretty Good Measurement (PGM), also known as the square-root measurement, which was introduced in~\cite{HW:PGM,hjsww:capacity}.
Consider an ensemble $\{(P_x,\rho_x)\}_{x\in{\cal X}}$, where $m$-qubit state $\rho_x$ is given with probability $P_x$. If we measure the state with a POVM $Q=\{Q_y\}_{y\in{\cal X}}$, then the probability to correctly identify $x$ given $\rho_x$ is $\Tr(Q_x\rho_x)$, and the (average-case) success probability is $\sum_{x\in{\cal X}} P_x \Tr(Q_x\rho_x)$.
The goal is to find a POVM that  maximizes this success probability:
\[
p_{max}\defeq \max_{Q}\sum_{x\in{\cal X}} P_x \Tr(Q_x\rho_x).
\]
The optimal POVM might be very complicated and hard to find.
Fortunately, there is an easy-to-define POVM, the Pretty Good Measurement, that doesn't do much worse than the optimal POVM.
It is defined as follows: let $\rho=\sum_x P_x\rho_x$ be the average density matrix, and define the measurement operators as 
\[
Q_y \defeq P_y \rho^{-1/2}\rho_y\rho^{-1/2}
\]
(the matrix power ``${-1/2}$'' is defined only on the support of $\rho$). It is easy to check that the $Q_y$'s are positive semidefinite and their sum is identity, so this is indeed a well-defined POVM.
Let
\[
p_{PGM} \defeq \sum_{x\in{\cal X}} P_x \Tr(Q_x\rho_x)
\]
be the PGM's (average-case) success probability in identifying $x$ from $\rho_x$.
Of course, the PGM cannot do better than the best-possible POVM, so $p_{PGM}\leq p_{max}$. The beautiful property of the PGM (and the explanation for its name) is that it doesn't do much worse: $p_{PGM}\geq p_{max}^2$. This was first proved in~\cite{barnum:pgmmeasurement} (see also~\cite{montanaro:distinguishability}). 
In particular, if the optimal POVM has a success probability that's close to~1, then so will the PGM.
In fact the lower bound can be improved to~\cite[Eq.~(12)]{renes:improvedbounds} 
\[
p_{PGM}\geq p_{max}^2 + (1-p_{max})^2/(|{\cal X}|-1).
\]
If we're predicting only one bit (so $|{\cal X}|=2$) then the lower bound is 
\begin{equation}\label{eq:lowboundPpgm}
    p_{PGM}\geq p_{max}^2+(1-p_{max})^2.
\end{equation}

As an example, consider the PGM for the ensemble of the 4 states of Figure~\ref{fig:2to1QRAC}, each with probability $P_x=1/4$.
Then the average density matrix $\rho$ is just the maximally mixed state~$I/2$, and the 4 operators of the PGM are $Q_y=\frac{1}{2}\rho_y$, for $y\in\01^2$.
Since $\Tr(Q_x\rho_x)=1/2$ for each $x$, we have $p_{PGM}=1/2$. This is best-possible (and hence equal to $p_{max}$) because of Nayak's bound that was proved in Footnote~\ref{ft:Nayak}.
One can also calculate that the probability that the PGM gets $x_1$ right (i.e., that $y_1=x_1$) is $3/4$, and its probability to get $x_2$ right is also $3/4$. This $3/4$ is better than the trivial $1/2$ of random guessing, but a bit worse than the QRAC success probability of $\cos(\pi/8)^2\approx 0.85$.%
\footnote{Unsurprisingly, these two probabilities $3/4$ and $\cos(\pi/8)^2$ are exactly the optimal classical and quantum winning probabilities for the so-called CHSH game.}

\subsection{Information-theoretic preliminaries}

In this section, we introduce some basic information-theoretic definitions that we use in Section~\ref{sec:lb} and a compression theorem (Theorem~\ref{thm:max_capacity}) we use in Section~\ref{sec:convQRACtoRAC}. 


We start with the basic information-theoretic definitions. The von Neumann entropy quantifies the amount of information a quantum state holds.
\begin{definition}[von Neumann entropy]\label{def:entropy}
	 The von Neumann entropy of a quantum state $\rho$ is 
	  $$ 
   H(\rho) \defeq - \Tr(\rho\log\rho).
   $$
\end{definition}

The conditional entropy $H(A|B)$ quantifies the amount of information the $A$-register adds for a party that already holds the $B$-register.

\begin{definition}[Conditional  entropy]\label{def:cond-entropy}
	 Let  $\rho_{AB}$ be a quantum state. The conditional quantum entropy of $A$ conditioned on $B$ is 
	  $$ 
   H(A|B)_{\rho} \defeq H(\rho_{AB})-H(\rho_B).
   $$
Sometimes we omit the subscript $\rho$ when the quantum state is clear from context.
\end{definition}

	
The mutual information quantifies the amount of correlation between two registers.

\begin{definition}[Quantum mutual-information]
	\label{def:mutinfo}
	Let  $\rho_{ABC}$ be a quantum state. We define the following measures.
	$$	
 \text{Mutual-information} : \quad \I(A:B)_{\rho}\defeq  H(\rho_A) +  H(	\rho_B)
- H(\rho_{AB}).
$$
	$\text{Conditional-mutual-information} : \quad \I(A:B~|~C)_{\rho}\defeq \I(A:BC)_{\rho}-\I(A:C)_{\rho}.$\\[0.6em]
Sometimes we omit the subscript $\rho$ when the quantum state is clear from context.
\end{definition}


The following are information-theoretical definitions and results from \cite{harsha2007communication} for compressing one-way classical communication.

\begin{definition}[Max-relative entropy~\cite{Datta_2009,JRS03}] Let $\rho, \sigma$ be states with $\supp(\rho) \subseteq \supp(\sigma)$. The max-relative entropy between $\rho$ and $\sigma$ is defined as
$$ 
\dmax{\rho}{\sigma}  \defeq  \inf \{ \lambda \in \mathbb{R} : \rho \leq 2^{\lambda} \sigma  \}.
$$      
\end{definition}

\begin{definition}[Max-information~\cite{Datta_2009}]\label{def:maxinfo}
  For state $\rho_{AB}$, 
  \[
  \imax(A:B)_{\rho} \defeq   \inf_{\sigma_{B}\in \mathcal{D}(\cH_B)}\dmax{\rho_{AB}}{\rho_{A}\otimes\sigma_{B}},
  \]
  where $\mathcal{D}(\cH_B)$ denotes the set of density matrices on the $B$-register.
  If $\rho_{AB}$ is a classical state (diagonal in the computational basis) then the $\inf$ above is achieved by a classical state $\sigma_B$.
\end{definition}

If one of the registers is classical, then we can bound max information with the following facts:

\begin{fact}
\label{rhoablessthanrhoaidentity}
For a cq-state  $\rho_{XA}$ (with $X$ classical): $\rho_{XA}  \le \id_X \otimes \rho_{A}$ and   $\rho_{XA}  \le \rho_X \otimes \id_{A}$.
\end{fact}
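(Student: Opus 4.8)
The plan is to expand the cq-state in its defining form and verify each operator inequality blockwise. Write $\rho_{XA} = \sum_x p_x \ketbra{x}{x} \otimes \rho^x_A$, where each $\rho^x_A$ is a density matrix on $\cH_A$ and $\{p_x\}$ is a probability distribution; then the reduced states are $\rho_X = \sum_x p_x \ketbra{x}{x}$ and $\rho_A = \sum_x p_x \rho^x_A$. Both claimed inequalities then amount to showing that a certain Hermitian operator which is block-diagonal with respect to the orthogonal decomposition $\bigoplus_x \ket{x}\otimes\cH_A$ is positive semidefinite, and a block-diagonal operator is PSD precisely when each of its blocks is PSD.

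For the first inequality I would compute $\id_X \otimes \rho_A - \rho_{XA} = \sum_x \ketbra{x}{x}\otimes(\rho_A - p_x\rho^x_A)$ and observe that each block satisfies $\rho_A - p_x\rho^x_A = \sum_{x'\neq x} p_{x'}\rho^{x'}_A \geq 0$, being a nonnegative combination of positive semidefinite matrices. For the second inequality, $\rho_X \otimes \id_A - \rho_{XA} = \sum_x p_x\,\ketbra{x}{x}\otimes(\id_A - \rho^x_A)$, and each block is PSD because $\rho^x_A$ is a density matrix and hence has all eigenvalues in $[0,1]$, so $\rho^x_A \leq \id_A$.

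There is essentially no obstacle here; the statement is elementary. The only point that warrants a moment's care is the reduction to blocks: equivalently, one tests the difference operator against an arbitrary vector $\sum_x \ket{x}\otimes\ket{\psi_x}$ and uses orthonormality of the $\ket{x}$ to kill the cross terms, leaving $\sum_x \bra{\psi_x}(\text{block}_x)\ket{\psi_x} \geq 0$. I would also note that for the first inequality one does not even need $p_x \leq 1$, only $p_{x'}\geq 0$ and $\rho^{x'}_A\succeq 0$, while for the second inequality one only needs that each $\rho^x_A$ is subnormalized in the sense $\rho^x_A \preceq \id_A$.
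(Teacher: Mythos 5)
Your proof is correct and follows essentially the same route as the paper's: both arguments reduce to the blockwise facts $p_x\rho^x_A \le \rho_A$ (since $\rho_A-p_x\rho^x_A=\sum_{x'\neq x}p_{x'}\rho^{x'}_A\succeq 0$) and $\rho^x_A\le \id_A$. Your explicit justification of the block-diagonal PSD reduction is a nice bit of extra care but does not change the substance.
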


\begin{proof}
The first inequality is proved by
    \begin{align*}
        \rho_{XA} &= \sum_x p_x \ketbra{x}{x} \otimes \rho^x_A \\
        &\leq \sum_x \ketbra{x}{x} \otimes \rho_A \\
        &= \id_X \otimes \rho_{A},
    \end{align*}
where the inequality is because $\sum_x p_x \rho^x_A = \rho_A$ and hence $p_x \rho^x_A \leq \rho_A$ for all~$x$.

The second inequality is proved by
   \begin{align*}
        \rho_{XA} &= \sum_x p_x \ketbra{x}{x} \otimes \rho^x_A \\
        &\leq \sum_x p_x \ketbra{x}{x} \otimes \id_A \\
        &= \rho_X \otimes \id_{A}.
    \end{align*}
\end{proof}

\begin{fact}[Monotonicity]\label{fact:mono}
	Let $\rho_{XA}$ be a cq-state (with $X$ classical) and $\mathcal{E}: \mathcal{L}(\cH_A)\rightarrow \mathcal{L}(\cH_B)$ be a CPTP map. Then,
	$$\mI(X:B)_{(\id_X \otimes \mathcal{E})(\rho)} \leq  \mI(X:A)_\rho  \leq \log{|A|}.$$
\end{fact}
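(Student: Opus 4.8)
The plan is to prove the two inequalities separately, both by elementary manipulation of operator inequalities using the characterization of $D_{\max}$ in its definition.

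For the right-hand inequality $\mI(X:A)_\rho \leq \log|A|$, I would simply plug the maximally mixed state $\sigma_A = \id_A/|A|$ into the infimum in Definition~\ref{def:maxinfo}. By the second inequality of Fact~\ref{rhoablessthanrhoaidentity} we have $\rho_{XA} \leq \rho_X \otimes \id_A = |A|\cdot\bigl(\rho_X \otimes (\id_A/|A|)\bigr)$, hence $\dmax{\rho_{XA}}{\rho_X\otimes(\id_A/|A|)} \leq \log|A|$, and since $\mI(X:A)_\rho$ is an infimum over all $\sigma_B$ it is at most this value. (This also shows $\supp(\rho_{XA})$ is contained in the relevant support, so the quantity is well defined.)

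For the monotonicity inequality, let $\lambda = \mI(X:A)_\rho$ and let $\sigma_A \in \mathcal{D}(\cH_A)$ be a state achieving (or, if the infimum is not attained, $\eps$-achieving) $\rho_{XA} \leq 2^{\lambda} \rho_X \otimes \sigma_A$. Apply the map $\id_X \otimes \mathcal{E}$ to both sides of this operator inequality. Since $\id_X \otimes \mathcal{E}$ is completely positive it maps positive semidefinite operators to positive semidefinite operators, so the inequality is preserved, giving $(\id_X \otimes \mathcal{E})(\rho_{XA}) \leq 2^{\lambda}\, (\id_X\otimes\mathcal{E})(\rho_X \otimes \sigma_A) = 2^{\lambda}\, \rho_X \otimes \mathcal{E}(\sigma_A)$, where I used that $\mathcal{E}$ acts only on the second register. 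Because $\mathcal{E}$ is trace-preserving, the $X$-marginal of $(\id_X\otimes\mathcal{E})(\rho_{XA})$ is still $\rho_X$, and $\mathcal{E}(\sigma_A)$ is a valid density matrix on $B$; hence this inequality witnesses $\mI(X:B)_{(\id_X\otimes\mathcal{E})(\rho)} \leq \lambda$ (letting $\eps\to 0$ in the approximate case). Combining the two parts yields the claimed chain of inequalities.

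There is essentially no serious obstacle here: the statement is just the data-processing inequality for max-relative entropy specialized to one register, plus the maximally-mixed-state bound. The only minor points to handle with care are whether the infimum in the definition of $\mI$ is attained — which I sidestep by working with an $\eps$-optimal $\sigma_A$ — and checking that the support condition required by $D_{\max}$ is inherited after applying the CP map, which follows from $\supp(\rho_{XA}) \subseteq \supp(\rho_X\otimes\sigma_A)$ together with monotonicity of support under completely positive maps.
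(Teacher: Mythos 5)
Your proof is correct and follows essentially the same route as the paper's: the first inequality by applying $\id_X\otimes\mathcal{E}$ to the operator inequality witnessing $\mI(X:A)_\rho$ (positivity preservation gives the new witness $\mathcal{E}(\sigma_A)$), and the second by plugging the maximally mixed state into the infimum via Fact~\ref{rhoablessthanrhoaidentity}. Your extra care about whether the infimum is attained and about support conditions is a minor refinement the paper glosses over, but does not change the argument.
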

\begin{proof}
    Let $c = \mI(X:A)_\rho$ and $\sigma_A$ be a state such that $\rho_{XA} \leq 2^c \rho_X \otimes \sigma_A$. Since $\mathcal{E}$ preserves positivity,
    $(\id_X \otimes \mathcal{E})(\rho_{XA}) \leq 2^c \rho_X \otimes \mathcal{E}(\sigma_A)$ which gives the first inequality.
    
    From Fact~\ref{rhoablessthanrhoaidentity}, 
    $\rho_{XA} \le \rho_X \otimes \id_A=2^{\log|A|}\rho_X \otimes U_A$, where $U_A$ is the uniform (maximally mixed) state on the $A$-register.  The second inequality now follows from the definition of $\mI$.
\end{proof}

The channel capacity quantifies the maximum amount of information a noisy channel can send. Analogous to channel capacity, we define max channel capacity as the maximum amount of information a noisy channel can send, measured in term of max information instead of mutual information.
\begin{definition}[Channel capacity and Max channel capacity] \label{def:max_capcity}
    Let $\cal X$ and $\cal Y$ be finite non-empty sets. Let $\mathcal{P}_\cY$ be the set of all probability distributions on $\cal Y$. A (classical) channel
with input alphabet $\cX$ and output alphabet $\cY$ is a function
$E: \cX  \rightarrow \mathcal{P}_\cY$. Let the input to the channel $E$ be a random variable $X$ with probability distribution $P_X=\{P_x\}_{x\in\cX}$ on $\cX$. Let the output of the channel be the random variable $Y$. The joint distribution of $(X,Y)$ is 
$$
P_{x,y}= P_x E(x)(y).
$$
The Shannon capacity of the channel is the mutual information between $X$ and $Y$ maximized over the choice of $P_X$:
\[
\mathcal{C}(E) \defeq \max_{P_X} I(X:Y).
\]
We define the max capacity of the channel as the max information between $X$ and $Y$ maximized over the choice of $P_X$:
\[
\mathcal{C}_{max}(E) \defeq \max_{P_X} \imax(X:Y).
\]
\end{definition}
\begin{remark}\label{rmk:cmax}
    Note that the $\imax(X:Y)$ in the definition of $\mathcal{C}_{max}$ only depends on the support of $P_X$, because for the classical $\rho_{XY}=\sum_{x} P_x \ketbra{x}{x} \otimes \rho^x_{Y}$,  $\rho_{XY} \leq 2^\lambda \rho_X \otimes \sigma_Y$ if and only if $\rho^x_{Y}\leq 2^\lambda \sigma_Y$, i.e. $\dmax{E(x)}{\sigma_Y}\leq \lambda$, for all $x$ in the support of $X$, and $\rho^x_{Y}$ is the diagonal matrix of the distribution $E(x)$, which is totally determined by the channel. 
\end{remark}


We formally define one-way protocols like \cite{harsha2007communication} as follows:
\begin{definition}[One-way protocol \cite{harsha2007communication}]\label{def:one-way}
    In a one-way protocol, the two parties Alice and Bob share a random string $R$, and also
have private random strings $R_A$ and $R_B$ respectively. Alice
receives an input $x\in \cX$. Based on the shared random string $R$
and her own private random string $R_A$, she sends a message $M(x, R, R_A)$ to Bob. On receiving the message $M$, Bob computes the output $y = y(M, R, R_B)$. The protocol is thus
specified by the two functions $M(x, R, R_A)$ and $y(M, R, R_B)$
and the distributions for the random strings $R$, $R_A$ and $R_B$. For such a protocol $\Pi$, let $\Pi(x)$ denote its (random) output
when the input given to Alice is $x$. Let $T^*_{\Pi}(x)$ be the worst-case length (measured in bits) of the message transmitted by Alice to Bob, that is,
$T^*_{\Pi}(x) = \max_{x, R, R_A}[|M (x, R, R_A)|]$. Note that the private random strings can be considered part of the shared random string if we are not concerned about minimizing the amount of shared randomness.
\end{definition}


 Combining the definitions above and the rejection sampling idea of \cite{JRS03,JRS05}, we get the following compression theorem:
 
\begin{theorem}[Channel message compression by max information]\label{thm:max_capacity}
    For every channel $E$, there is a one-way protocol (Definition~\ref{def:one-way}) $\Pi$ such that for all $x \in\cX$ and $\eta>0$, Bob’s output $\Pi(x)$ has distribution $E'(x)$ with worst-case communication length (measured in bits) of
\[
\max_{x \in \cX} T^*_{\Pi}(x) \leq \mathcal{C}_{max}(E)+O(\log\log (1/\eta))
\]
and error (measured in trace distance)
$$
\Delta(E'(x),E(x)) \leq \eta.
$$
\end{theorem}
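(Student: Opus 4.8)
The plan is to realize Bob's output by a rejection-sampling / Las Vegas style simulation of the channel $E$, following the template of Jain–Radhakrishnan–Sen~\cite{JRS03,JRS05}. First I would fix, for each input $x$, an optimal "target" distribution $\sigma_Y$ witnessing $\mathcal{C}_{max}(E)$: by Definition~\ref{def:max_capcity} and Remark~\ref{rmk:cmax}, there is a single distribution $\sigma_Y$ on $\cY$ such that $\dmax{E(x)}{\sigma_Y}\le \mathcal{C}_{max}(E)=:c$ for every $x\in\cX$, i.e.\ $E(x)(y)\le 2^{c}\sigma_Y(y)$ pointwise. The shared randomness $R$ will be used to sample an i.i.d.\ sequence $y_1,y_2,\dots$ from $\sigma_Y$ together with auxiliary thresholds; Alice, knowing $x$, runs the rejection-sampling acceptance test against $E(x)$ and finds the index $K$ of the first accepted sample; she sends (an encoding of) $K$ to Bob, who outputs $y_K$. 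If the acceptance test used true probabilities $E(x)(y)/(2^{c}\sigma_Y(y))$, the accepted sample would be distributed exactly as $E(x)$ and the expected index would be about $2^{c}$, so naively $K$ costs about $\log(1/\text{something})$ bits — but the tail of $K$ is too heavy to bound the \emph{worst-case} message length by $c+O(\log\log(1/\eta))$ directly.

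The key fix, and the step I expect to be the main obstacle, is controlling the worst-case (not just expected) length of the encoding of $K$ while keeping the output distribution within trace distance $\eta$ of $E(x)$. The standard device is truncation: cap the number of samples at some $N=N(c,\eta)$, and if no sample is accepted within the first $N$ draws, abort to an arbitrary fixed output. One shows that with $N$ of order $2^{c}\cdot\mathrm{poly}(1/\eta)$ — actually one can be cleverer and take $N$ doubly-exponentially tied to $\eta$, giving the $\log\log(1/\eta)$ dependence — the abort probability is at most $\eta$, hence the output distribution $E'(x)$ satisfies $\Delta(E'(x),E(x))\le\eta$ by a union/coupling argument. The subtle part is the message length: encoding $K\in\{1,\dots,N\}$ naively costs $\log N\approx c+\log(\mathrm{poly}(1/\eta))$ bits, which is $c+O(\log(1/\eta))$, not $c+O(\log\log(1/\eta))$. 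To get the doubly-logarithmic bound one uses a \emph{prefix-free / variable-length} encoding of $K$ (e.g.\ an Elias-type code) so that the typical value $K\approx 2^{c}$ costs $\approx c$ bits, and only the rare large values cost more; combined with a second level of rejection sampling or the "biased coin" trick of~\cite{JRS03}, the worst-case length over the support can be pushed down to $\mathcal{C}_{max}(E)+O(\log\log(1/\eta))$. Here Remark~\ref{rmk:cmax} is what guarantees that the single $\sigma_Y$ works uniformly over all $x$, so the same shared sample stream serves every input and the bound is genuinely worst-case in $x$.

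Finally I would assemble the pieces: define $\Pi$ formally in the format of Definition~\ref{def:one-way}, with $R$ encoding the i.i.d.\ $\sigma_Y$-stream and the rejection thresholds, $R_A$ unused (or used for internal randomness of the encoder), Alice's message $M(x,R,R_A)$ the prefix-free encoding of the truncated first-acceptance index $K$, and Bob's output $y(M,R,R_B)=y_K$ (or the fixed abort symbol). Then verify (i) $\Delta(\Pi(x)\text{'s law},E(x))\le\eta$ from the truncation bound, and (ii) $\max_x T^*_\Pi(x)\le c+O(\log\log(1/\eta))$ from the length analysis of the prefix-free code. I would cite~\cite{JRS03,JRS05,harsha2007communication} for the rejection-sampling machinery and present the truncation-plus-prefix-code argument as the one genuinely new bookkeeping step, since everything else is a direct adaptation.
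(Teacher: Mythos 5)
Your overall architecture --- a single target distribution $\sigma_Y$ with $E(x)(y)\le 2^{c}\sigma_Y(y)$ for all $x$ (where $c=\mathcal{C}_{max}(E)$, justified via Remark~\ref{rmk:cmax}), a shared i.i.d.\ stream from $\sigma_Y$, rejection sampling by Alice, transmission of the first accepted index, and truncation plus abort --- is exactly the paper's proof. But there is a genuine gap in the one step you yourself flag as the main obstacle: you assert that achieving abort probability $\le\eta$ forces the truncation threshold up to $N=2^{c}\cdot\mathrm{poly}(1/\eta)$, conclude that a fixed-length encoding of the index costs $c+O(\log(1/\eta))$ bits, and then try to rescue the $\log\log$ bound with a prefix-free code ``combined with a second level of rejection sampling.'' That rescue cannot work as stated: by Kraft's inequality any uniquely decodable code on $N$ symbols has some codeword of length $\ge\log_2 N$, so if $N=2^{c}\,\mathrm{poly}(1/\eta)$ the \emph{worst-case} message length (which is what $T^*_\Pi$ measures) is still $c+\Omega(\log(1/\eta))$ no matter how cleverly you encode; and the ``second level of rejection sampling'' is left entirely unspecified.

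The fix is that your premise is wrong: no such large $N$ is needed. Each trial accepts with probability exactly $2^{-a(x)}\ge 2^{-c}$ where $a(x)=\dmax{E(x)}{\sigma_Y}\le c$, so after $N=\ceil{2^{c}\ln(1/\eta)}$ independent trials the probability that none accepts is
\[
\left(1-2^{-a(x)}\right)^{2^{c}\ln(1/\eta)}\le e^{-2^{c}\ln(1/\eta)/2^{a(x)}}\le\eta .
\]
With this $N$ the plain fixed-length encoding of the index already costs $\ceil{\log_2 N}\le c+\log\ln(1/\eta)+1=c+O(\log\log(1/\eta))$ bits, and conditioned on acceptance the output is distributed exactly as $E(x)$, so the abort event alone accounts for the trace distance $\eta$. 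This is precisely what the paper does; no prefix-free coding or second rejection stage is required. So your proposal identifies the right ingredients but does not close its central quantitative step, and the machinery you propose to close it would not suffice.
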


\begin{proof}
Let $(X,Y)$ be random variables that realize the max channel capacity $\mathcal{C}_{max}(E)$. Let $Z$ be the random variable that realizes the $\inf$ in $\imax(X:Y)$ (Definition~\ref{def:maxinfo}). Then, by the definition of $\mathcal{C}_{max}$ and $\imax$, for all $x \in \cX$, $\dmax{E(x)}{Z} \leq \mathcal{C}_{max}(E)$.\footnote{As stated in Remark~\ref{rmk:cmax}, $\imax(X,Y)$ only depends on the support of $P_X$, and if there is some $x_0$ not in the support of $P_X$ with $\dmax{E(x_0)}{Z} > \mathcal{C}_{max}(E)$, then that means $P_X$ does not maximize $\imax(X,Y)$, which is a contradiction to the maximization in Definition~\ref{def:max_capcity}.} Let $a(x)=\dmax{E(x)}{Z}$. Note that ${\Pr[E(x)=z]}\leq{2^{a(x)}\Pr[Z=z]}$ for all $x \in \cX$ and $z$. If Alice and Bob pre-share iid copies $Z_1, Z_2, \dots$  of $Z$, then Alice can let Bob reject-sample $E(x)$ with error~$\eta$, as follows. Alice looks through $Z_1, Z_2, \dots$ one by one. Upon seeing $Z_i=z_i$, Alice generates a random variable $\chi_i\in \{0,1\}$ with probability
\begin{equation*}
    \Pr[\chi_i=1\mid z_i]=\frac{\Pr[E(x)=z_i]}{2^{a(x)}\Pr[Z=z_i]}.
\end{equation*}
This guarantees that
\begin{align}
        &\Pr[\chi_i=1] = \frac{1}{2^{a(x)}} \geq \frac{1}{2^{\mathcal{C}_{max}(E)}} \label{eq:pr_chi_1} \\ 
        &\Pr[z_i=z\mid\chi_i=1]=\Pr[E(x)=z]. \label{eq:rej_samp}
    \end{align}
Looking through the $i$'s, Alice sends Bob the first index $i\leq \ceil{2^{\mathcal{C}_{max}(E)}\ln(1/\eta)}$ where $\chi_i=1$, if such an $i$ exists. This requires sending at most $\mathcal{C}_{max}(E)+\log(\ln(1/\eta))+1$ bits. Upon receiving $i$, Bob outputs $z_i$, which by Eq.~\eqref{eq:rej_samp} is distributed exactly according to $E(x)$. If no $i$ was sent, then Bob gives a random output.  By Eq.~\eqref{eq:pr_chi_1}, the probability that Bob does not receive an $i$ is at most
\begin{equation}
    \left(1-\frac{1}{2^{a(x)}}\right)^{2^{\mathcal{C}_{max}(E)}\ln (1/\eta)} \leq e^{-2^{\mathcal{C}_{max}(E)}\ln(1/\eta)/2^{a(x)}}\leq e^{-\ln(1/\eta)}=\eta.
\end{equation}
Hence the total variation distance between the actual output distribution $E'(x)$ and the intended output distribution $E(x)$, is at most $\eta$.
\end{proof}





\section{Using the PGM to predict most bits of an average-case $x$}\label{sec:avcase}

Here we show how to learn most bits of $x$ in an average-case sense, where there is a known distribution $x\sim P_X$ on the encoded $n$-bit string. The idea is similar to~\cite{konig2008bounded}, as also used in~\cite{boddu2023relating}. Our main theorem is that if we use the PGM on an ensemble from a QRAC (where ${\cal X}=\01^n$), then the measurement output $y\in\01^n$ will predict each individual $x_i$ well, and hence have small Hamming distance to the encoded~$x$ (even though $y$ is very unlikely to be \emph{exactly equal} to $x$, as explained in the introduction).

\begin{theorem}\label{th:QRACmostbitsavcase}
Let $x\mapsto \rho_x$ be an $(n,m,p)$-QRAC and $P_X=\{P_x\}_{x\in\01^n}$ be a probability distribution. Suppose we apply the PGM $\{Q_y\}_{y\in\01^n}$ for the ensemble $\{(P_x,\rho_x)\}_{x\in\01^n}$ and call the measurement outcome $y$.
Then
\[
\Exp_{x\sim P_X,\, y\sim {\rm Tr}(Q_y\rho_x)}[d_H(x,y)]\leq 2p(1-p)n.
\]
\end{theorem}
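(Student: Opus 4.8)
The plan is to reduce the claim about Hamming distance to a per-coordinate statement and then control each coordinate separately. By linearity of expectation, $\Exp_{x,y}[d_H(x,y)] = \sum_{i=1}^n \Pr_{x,y}[x_i \neq y_i]$, so it suffices to show that for each fixed $i\in[n]$, the probability that the PGM outcome $y$ disagrees with $x$ in coordinate $i$ is at most $2p(1-p)$. The natural object to study is the ``one-bit PGM'' induced by $\{Q_y\}$: group the POVM elements by the value of the $i$-th bit, i.e. define $\widetilde Q_b := \sum_{y: y_i = b} Q_y$ for $b\in\{0,1\}$. This is a two-outcome POVM, and the probability that the $i$-th bit of the PGM outcome equals $b$ given input $\rho_x$ is exactly $\Tr(\widetilde Q_b \rho_x)$. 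So I need: $\sum_x P_x \Tr(\widetilde Q_{x_i} \rho_x) \geq 1 - 2p(1-p) = p^2 + (1-p)^2$.

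The key step is to recognize $\{\widetilde Q_0, \widetilde Q_1\}$ as (or relate it to) a pretty good measurement for a derived binary ensemble, so that the quadratic PGM bound (\ref{eq:lowboundPpgm}) can be applied. Concretely, consider the binary ensemble where the state associated with bit-value $b$ is the $P_X$-average of the $\rho_x$ over all $x$ with $x_i=b$; more precisely, let $p_b := \sum_{x: x_i=b} P_x$ and $\sigma_b := \frac{1}{p_b}\sum_{x: x_i=b} P_x \rho_x$, giving the ensemble $\{(p_b, \sigma_b)\}_{b\in\{0,1\}}$. Its average state is the same $\rho = \sum_x P_x \rho_x$ as for the full ensemble, and I expect that the PGM for this binary ensemble has elements $p_b \rho^{-1/2}\sigma_b \rho^{-1/2} = \sum_{y:y_i=b} P_y \rho^{-1/2}\rho_y\rho^{-1/2} = \widetilde Q_b$ — i.e. the one-bit-marginalized PGM is exactly the PGM of the marginalized ensemble. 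Granting this, the average-case success probability of $\{\widetilde Q_b\}$ in identifying $x_i$ is precisely $p_{PGM}$ for this binary ensemble, so by (\ref{eq:lowboundPpgm}) it is at least $p_{max}^2 + (1-p_{max})^2$ where $p_{max}$ is the optimal binary success probability. Since $x\mapsto\rho_x$ is an $(n,m,p)$-QRAC, there is a two-outcome POVM recovering $x_i$ with success probability $\geq p$ on every $x$, hence with $P_X$-average success probability $\geq p$, so $p_{max}\geq p$. Because $t\mapsto t^2 + (1-t)^2$ is increasing on $[1/2,1]$ and $p\geq 1/2$, we get success probability $\geq p^2+(1-p)^2$, i.e. $\Pr[x_i\neq y_i]\leq 2p(1-p)$, and summing over $i$ finishes the proof.

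The main obstacle I anticipate is the bookkeeping around the support of $\rho$ and the matrix inverse $\rho^{-1/2}$: I must check that marginalizing the PGM elements really does produce the PGM of the coarse-grained ensemble, which is essentially immediate from the definitions since $\rho$ is unchanged and $\rho^{-1/2}(\cdot)\rho^{-1/2}$ is linear, but I should be careful that $\sum_b \widetilde Q_b = \id$ on $\supp(\rho)$ and that the QRAC-provided POVM (which acts on the full $m$-qubit space) can be assumed to be supported appropriately, or else pass to its restriction without decreasing success probability. A secondary point to state cleanly is that the QRAC guarantee "$\Tr(M_b\rho_x)\geq p$ whenever $x_i=b$" gives $\sum_x P_x \Tr(M_{x_i}\rho_x)\geq p$ for every distribution $P_X$, which is what feeds into $p_{max}\geq p$; this is trivial but worth spelling out since the whole argument hinges on it. No heavy computation is needed beyond monotonicity of $t^2+(1-t)^2$.
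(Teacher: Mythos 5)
Your proposal is correct and follows essentially the same route as the paper's proof: marginalize the PGM over the other $n-1$ bits, recognize the result as the PGM of the coarse-grained two-state ensemble (which has the same average state $\rho$), invoke the QRAC guarantee to get $p_{max}\geq p$ for that binary ensemble, apply the bound~\eqref{eq:lowboundPpgm}, and sum over $i$ by linearity of expectation. Your extra care about the support of $\rho$ and the monotonicity of $t\mapsto t^2+(1-t)^2$ on $[1/2,1]$ is sound but not a departure from the paper's argument.
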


\begin{proof}
Fix an $i\in[n]$.
The key to the proof is to show that $x_i=y_i$ with probability $\geq p^2+(1-p)^2$.
By ignoring (i.e., summing over) the $2^{n-1}$ possible outcomes for the $n-1$ other bits, the PGM induces a 2-outcome POVM with operators
\[
F_b\defeq \sum_{y\in\01^n:y_i=b} Q_y=\rho^{-1/2}\left(\sum_{y:y_i=b}P_y\rho_y\right)\rho^{-1/2},\, \mbox{ for }b\in\01.
\]
For $b\in\01$, define
\[
\mbox{probability }P_b\defeq \Pr_{x\sim P_X}[x_i=b]=\sum_{x:x_i=b}P_x\mbox{ ~~and~~density matrix }\rho_b\defeq \frac{1}{P_b}\sum_{x:x_i=b}P_x\rho_x
\]
(this assumes neither $P_b$ is zero; of course, identifying $x_i$ is trivial if $P_0$ or $P_1$ is~0).
Then $\rho=P_0\rho_0+P_1\rho_1$, so the average state is the same if we care about identifying the whole $x$ and if we care about identifying only $x_i$. The key observation is that the POVM $\{F_0=P_0\rho^{-1/2}\rho_0\rho^{-1/2},F_1=P_1\rho^{-1/2}\rho_1\rho^{-1/2}\}$ is exactly the PGM for the 2-state ensemble $\{(P_b,\rho_b)\}_{b\in\01}$, i.e., for identifying just the bit~$x_i$.
The QRAC property implies that there exists a POVM that identifies $x_i$ with success probability $\geq p$ even in the worst case (i.e., for every $x$), which means $p_{max}$ for this 2-state ensemble is $\geq p$.
But then by Eq.~\eqref{eq:lowboundPpgm}, the PGM $\{F_0,F_1\}$ identifies $x_i$ with success probability $\geq p_{max}^2+(1-p_{max})^2\geq p^2+(1-p)^2$; hence $\Pr[x_i\neq y_i]\leq 1-(p^2+(1-p)^2)=2p(1-p)$, where the probability is taken over $x\sim P_X$ as well as over the randomness of the measurement outcome~$y$.

The above argument works for every $i$, and the 2-outcome PGMs for the individual $x_i$'s all come from the same $2^n$-outcome PGM for $\{(P_x,\rho_x)\}$ by just ignoring $n-1$ of the bits of $y$. Therefore we can use linearity of expectation to obtain
\[
    \Exp_{x\sim P_X,\, y\sim {\rm Tr}(Q_y\rho_x)}[d_H(x,y)]=\sum_{i=1}^n \Pr[x_i\neq y_i]\leq 2p(1-p)n.\\[-1.5em]
\]
\end{proof}

If $p$ is close to~1, then the expected Hamming distance $d_H(x,y)$ is at most a small constant times $n$, and we can use Markov's inequality to show that in fact \emph{with high probability} the Hamming distance is at most a (still small but slightly bigger) constant times $n$.

\section{Using minimax theorem to predict most bits of a worst-case~$x$}\label{sec:worstcase}

The result of the previous section implies that we can learn $x$ up to small Hamming distance for a ``typical $x$'' or for ``most $x$'' (measured under the distribution $\{P_x\}$), but the PGM might actually be totally off in Hamming distance when trying to identify some of the $x$'s from their encoding state~$\rho_x$.
In this section, we will strengthen the result to a measurement that works for every $x$.

The minimax theorem from game theory (originally due to von Neumann, see e.g.~\cite{sion:minimax} for a more general version) implies the following: if ${\cal P}$ and ${\cal Q}$ are compact convex subsets of linear spaces, and $f:{\cal P}\times{\cal Q}\to\mathbb{R}$ is a linear function, then:
\begin{equation}\label{eq:minimax}
\max_{P\in{\cal P}}\min_{Q\in{\cal Q}} f(P,Q) = \min_{Q\in{\cal Q}}\max_{P\in{\cal P}} f(P,Q).
\end{equation}
We will apply the minimax theorem with ${\cal P}$ equal to the set of all probability distributions over $\01^n$, and ${\cal Q}$ equal to the set of all $2^n$-outcome POVMs on $m$ qubits.
It is easy to see that both $\cal P$ and $\cal Q$ are convex and compact (for $\cal Q$, think of an element $Q=(Q_y)_{y\in\01^n}$ as a vector of $2^n$ POVM-elements that sum to identity). Let $f$ be the expected Hamming distance between $x$ and $y$:
\[
f(P,Q)=\Exp_{x\sim P_X,\,y\sim{\rm Tr}(Q_y\rho_x)}[d_H(x,y)].
\]
Clearly $f$ is linear in $P$; to see that $f$ is also linear in $Q$, note that for a fixed $x$ we can write
\begin{align*}
\Exp_{y\sim{\rm Tr}(Q_y\rho_x)}[d_H(x,y)]
& =\sum_{i=1}^n \Pr[x_i\neq y_i] =\sum_{i=1}^n \left(\Pr[x_i=1\, \&\ y_i=0] + \Pr[x_i=0\, \&\ y_i=1]\right )\\
& =\sum_{i=1}^n \left (x_i\Tr((\sum_{y:y_i=0}Q_y)\rho_x)+(1-x_i)\Tr((\sum_{y:y_i=1}Q_y)\rho_x)\right ),
\end{align*}
which is linear in~$Q$.

Now consider the left-hand side of Eq.~\eqref{eq:minimax}. Our Theorem~\ref{th:QRACmostbitsavcase} says that for every distribution $P$, there exists a POVM~$Q$ (with outcome $y$) such that the expected Hamming distance between $x$ and $y$ is $\leq 2p(1-p)n$. Therefore $2p(1-p)n$ is an upper bound on the left-hand side of  of Eq.~\eqref{eq:minimax} and (because it's an equality) on the right-hand side as well.
Observing that the ``inner'' maximization (over $P$) on the right-hand side can be taken to be a maximization over $x$ (given $Q$, just take $P$ to be the distribution supported on an $x$ where the expected Hamming distance is largest), we have
\begin{equation}\label{eq:minimaxinstantiated}
\min_{Q\in{\cal Q}}\max_{x\in\01^n} \Exp_{y\sim {\rm Tr}(Q_y\rho_x)}[d_H(x,y)]=\min_{Q\in{\cal Q}}\max_{p\in{\cal P}} f(P,Q)\leq 2p(1-p)n.
\end{equation}
The POVM $Q$ that minimizes the left-hand side of Eq.~\eqref{eq:minimaxinstantiated} gives us the strengthening of Theorem~\ref{th:QRACmostbitsavcase}:

\begin{theorem}\label{th:QRACmostbitsworstcase}
Let $x\mapsto \rho_x$ be an $(n,m,p)$-QRAC.
There exists a POVM $\{Q_y\}_{y\in\01^n}$ such that for every $x\in\01^n$:
\[
\Exp_{y\sim{\rm Tr}(Q_y\rho_x)}[d_H(x,y)]\leq 2p(1-p)n,
\]
where the expectation is taken over the distribution of the measurement outcome~$y$ (not over~$x$).
\end{theorem}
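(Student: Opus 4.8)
The plan is to lift the average-case bound of Theorem~\ref{th:QRACmostbitsavcase} to a worst-case bound by a standard minimax argument, exactly along the lines set up in Eqs.~\eqref{eq:minimax}--\eqref{eq:minimaxinstantiated}. Take $\mathcal{P}$ to be the set of all probability distributions on $\01^n$ and $\mathcal{Q}$ to be the set of all $2^n$-outcome POVMs on $m$ qubits; an element of $\mathcal{Q}$ is a tuple $(Q_y)_{y\in\01^n}$ of positive semidefinite operators summing to $\id$, so $\mathcal{P}$ and $\mathcal{Q}$ are both convex and compact subsets of finite-dimensional real vector spaces. Define the payoff $f(P,Q)=\Exp_{x\sim P,\,y\sim\Tr(Q_y\rho_x)}[d_H(x,y)]$. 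It is manifestly linear in $P$. To see it is linear in $Q$, expand $d_H(x,y)=\sum_{i=1}^n[x_i\neq y_i]$ and use linearity of expectation to write $f(P,Q)=\sum_{i=1}^n\sum_x P_x\big(x_i\Tr((\sum_{y:y_i=0}Q_y)\rho_x)+(1-x_i)\Tr((\sum_{y:y_i=1}Q_y)\rho_x)\big)$, which is a linear function of the POVM elements.

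Next I would invoke the minimax theorem (von Neumann's, or the more general version of Sion~\cite{sion:minimax}) to get $\max_{P\in\mathcal{P}}\min_{Q\in\mathcal{Q}}f(P,Q)=\min_{Q\in\mathcal{Q}}\max_{P\in\mathcal{P}}f(P,Q)$. For the left-hand side, Theorem~\ref{th:QRACmostbitsavcase} tells us that for \emph{every} distribution $P$ the PGM for the ensemble $\{(P_x,\rho_x)\}$ is a specific POVM $Q_{\mathrm{PGM}}$ with $f(P,Q_{\mathrm{PGM}})\leq 2p(1-p)n$, hence $\min_{Q}f(P,Q)\leq 2p(1-p)n$ for every $P$, and so the left-hand side is $\leq 2p(1-p)n$. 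By the minimax equality the right-hand side is also $\leq 2p(1-p)n$; fix a POVM $Q$ that attains this minimum.

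Finally, for this fixed $Q$ the map $P\mapsto f(P,Q)$ is linear on the simplex $\mathcal{P}$, so its maximum is attained at a vertex, i.e., at a point mass on some $x\in\01^n$. Therefore $\max_{x\in\01^n}\Exp_{y\sim\Tr(Q_y\rho_x)}[d_H(x,y)]=\max_{P\in\mathcal{P}}f(P,Q)\leq 2p(1-p)n$, which is precisely the asserted statement. The only things needing verification are the hypotheses of the minimax theorem --- convexity and compactness of $\mathcal{P}$ and $\mathcal{Q}$, and bilinearity of $f$ --- all of which are routine in this finite-dimensional setting; I do not expect any genuine obstacle here. I note that the argument is non-constructive: it yields the existence of a good worst-case $Q$ without describing it (unlike the explicit PGM of the average-case theorem).
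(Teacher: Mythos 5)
Your proposal is correct and follows essentially the same route as the paper: the same bilinear payoff $f(P,Q)=\Exp_{x\sim P,\,y\sim\Tr(Q_y\rho_x)}[d_H(x,y)]$ over the simplex of distributions and the convex compact set of $2^n$-outcome POVMs, the same use of Theorem~\ref{th:QRACmostbitsavcase} to bound $\max_P\min_Q f$ via the PGM, and the same observation that the inner maximization over $P$ on the min-max side is attained at a point mass on some $x$. No gaps.
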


\section{Lower bound for the number of qubits of QRACs}\label{sec:lb}

Our result that one can get $x$ up to small Hamming distance from $\rho_x$ implies a lower bound on the number of qubits in $\rho_x$, Theorem~\ref{th:lowerboundonm}. Our lower bound is quantitatively weaker than Nayak's $m\geq (1-H(p))n$.


\begin{theorem}\label{th:lowerboundonm}
Let $x\mapsto \rho_x$ be an $(n,m,p)$-QRAC. Then $m\geq ( 1-H(2p(1-p)) )n - \log_2(n+1)$.
\end{theorem}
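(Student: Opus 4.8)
The plan is to derive the qubit lower bound from Theorem~\ref{th:QRACmostbitsworstcase} via a counting/covering argument, combined with Holevo's bound (or an even more elementary dimension count). The intuition: Theorem~\ref{th:QRACmostbitsworstcase} gives a POVM $\{Q_y\}$ on $m$ qubits that, for every $x$, produces an outcome $y$ within expected Hamming distance $2p(1-p)n$ of $x$. By Markov's inequality, with probability at least (say) $1/2$ the outcome lies in the Hamming ball of radius $r := 4p(1-p)n$ around $x$. Since distinct $x$'s are recovered up to this small radius, the encoding must be ``rich enough'' to distinguish $\sim 2^n / (\text{volume of a Hamming ball of radius } r)$ essentially-disjoint messages, and an $m$-qubit system can only support so much of that.

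First I would set $\delta = 2p(1-p)$ and apply Markov to Theorem~\ref{th:QRACmostbitsworstcase}: for each $x$, $\Pr_{y\sim\Tr(Q_y\rho_x)}[d_H(x,y) > t\delta n] < 1/t$ for any $t>1$. Next I would turn this approximate-recovery statement into an information-theoretic one. The cleanest route: consider the cq-state $\rho_{XY} = \sum_x \frac{1}{2^n}\ketbra{x}{x}\otimes \rho_x$ where $X$ is uniform, apply the POVM to the $Y$-register to get a classical outcome $Y'$, and use Holevo / data processing: $I(X:Y') \le I(X:Y)_\rho \le m$ (the second inequality by Holevo on $m$ qubits). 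On the other hand, $I(X:Y') = H(X) - H(X|Y') = n - H(X|Y')$, and since given $Y'=y$ the variable $X$ lies (with good probability) in a Hamming ball of radius $t\delta n$ around $y$, Fano-type reasoning bounds $H(X|Y')$ by roughly $H(t\delta)\,n + \log(n+1) + O(1)$ — the $\log(n+1)$ accounting for the $n+1$ possible Hamming distances and a small-probability slack. Combining, $m \ge n - H(t\delta)n - \log(n+1) - O(1)$.

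To hit the stated bound $m \ge (1-H(2p(1-p)))n - \log_2(n+1)$ exactly, I would want the sharper version rather than going through Markov with a loose constant $t$. The right move is probably to use the expectation bound directly in a Fano-style argument: $H(X|Y')$ where $X|Y'{=}y$ is supported on $\01^n$ with $\Exp[d_H(X,y)] \le \delta n$. By a standard maximum-entropy argument, among all distributions on $\01^n$ with a given expected Hamming distance $\le \delta n$ from a fixed center, entropy is maximized by a product of i.i.d.\ bits each with bias $\delta$, giving $H(X|Y'{=}y) \le H(\delta) n$, hence $H(X|Y') \le H(\delta) n$. Then $m \ge I(X:Y') = n - H(X|Y') \ge (1-H(\delta))n$, and the $-\log_2(n+1)$ must be absorbed from wherever the argument is slightly lossy — most naturally by not assuming the measurement is exactly within distance $\delta n$ but rather in expectation, which via a union over the $n+1$ distance-levels or an auxiliary "distance register" of $\log_2(n+1)$ bits costs exactly that term.

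\textbf{The main obstacle} I anticipate is getting the constant and the $\log_2(n+1)$ term clean: the maximum-entropy claim ``expected Hamming distance $\le \delta n$ implies conditional entropy $\le H(\delta)n$'' needs $\delta \le 1/2$ (true here since $2p(1-p)\le 1/2$) and needs to be applied to the conditional distribution of $X$ given the \emph{random} outcome $Y'$, which requires care because $f$ in Theorem~\ref{th:QRACmostbitsworstcase} bounds $\Exp_y[d_H(x,y)]$ for fixed $x$, not $\Exp_x[d_H(x,y)]$ for fixed $y$ — so I would first symmetrize using the uniform prior on $X$ (Bayes: $\Exp_{x}\Exp_{y|x}[d_H] = \Exp_{y}\Exp_{x|y}[d_H]$) to flip the conditioning, then apply max-entropy to $X|Y'$. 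After that, everything is linearity of expectation plus Holevo, and the $\log_2(n+1)$ falls out of encoding the Hamming-distance level. I would present it as: symmetrize, flip conditioning, max-entropy bound on $H(X|Y')$, Holevo bound $I(X:Y')\le m$, and rearrange.
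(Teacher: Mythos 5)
Your refined route (the second one you settle on) is correct, but it diverges from the paper's proof at the key entropy-bounding step, and it is worth noting what each buys. The paper introduces the auxiliary random variable $D=d_H(X,Y)$, bounds $H(X|Y,D=d)\leq H(d/n)n$ by counting the $\binom{n}{d}$ strings at distance $d$ from $y$, applies Jensen over $d$, and then pays $H(D|Y)\leq\log_2(n+1)$ to pass from $H(X|Y,D)$ back to $H(X|Y)$ --- this is exactly the ``distance register'' you guessed the $\log_2(n+1)$ comes from. Your max-entropy route instead bounds $H(X|Y{=}y)\leq\sum_i H(\Pr[X_i\neq y_i\mid Y{=}y])$ by subadditivity and then applies Jensen over both $i$ and $y$ together with monotonicity of $H$ on $[0,1/2]$ (using $2p(1-p)\leq 1/2$); your worry about flipping the conditioning is handled automatically, since Jensen is applied before monotonicity and only the average per-bit error matters. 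This gives $H(X|Y)\leq H(2p(1-p))n$ outright, so combined with the same data-processing/Holevo upper bound $I(X:Y)\leq m$ you prove the \emph{stronger} statement $m\geq(1-H(2p(1-p)))n$, with no $-\log_2(n+1)$ term needed; your instinct to ``absorb'' that term somewhere is unnecessary for your argument (it is an artifact of the paper's conditioning on $D$, and the claimed bound follows a fortiori). Your first paragraph's Markov-plus-covering sketch is indeed too lossy, as you recognized, and should be dropped in favor of the max-entropy argument.
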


Note that the factor $1-H(2p(1-p))$ in our lower bound is a constant factor worse than Nayak's optimal constant $1-H(p)$ if $p$ is close to~1, but quadratically worse if $p$ is close to~$1/2$.\footnote{If $p=1/2+\beta$, then $1-H(p)=\Theta(\beta^2)$, while $1-H(2p(1-p))=\Theta(\beta^4)$ because $2p(1-p)=1/2-2\beta^2$.}

\bigskip

\begin{proof}
Let $X\in\01^n$ be uniformly random, and $Y\in\01^n$ the random variable which is the output of the PGM on the ensemble $\{(1/2^n,\rho_x)\}$. Let $D=d_H(X,Y)$ be a new random variable, which ranges over $0,1,\ldots,n$. We have $\Exp[D]\leq 2p(1-p)n$ by Theorem~\ref{th:QRACmostbitsworstcase}.
 If we condition on $D=d$, then given $Y$ there are only $\binom{n}{d}\leq 2^{H(d/n)n}$ possibilities left for~$X$, so the conditional entropy of $X$ is $H(X|Y,D=d)\leq H(d/n)n$.
Taking expectation over the different values of $D$, we have
\[
H(X|Y,D)=\Exp_d[H(X|Y,D=d)]\leq \Exp_d[H(d/n)n]\leq H(\Exp_d[d/n])\,n\leq  H(2p(1-p))n
\]
where the second inequality is Jensen's inequality ($H(\cdot)$ is a concave function), and the last inequality uses that $H(q)$ is increasing for $q\in[0,1/2]$. We have $H(X|Y,D)=H(X,D|Y)-H(D|Y)\geq H(X|Y)-\log_2(n+1)$,
and hence we can lower bound the mutual information between $X$ and $Y$ as
\[
I(X:Y)=H(X)-H(X|Y)\geq H(X)-H(X|Y,D)-\log_2(n+1)\geq (1-H(2p(1-p)))n - \log_2(n+1).
\]
By the data-processing inequality and the fact that $(X,\rho_X)$ is a classical-quantum state, we have the upper bound $I(X:Y)\leq I(X:\rho_X)\leq H(\rho_X)\leq m$. The theorem now follows by combining our upper and lower bounds on $I(X:Y)$.
\end{proof}

\section{Converting QRAC to RAC}\label{sec:convQRACtoRAC}

In this section, we give a constructive version of the result of Section~\ref{sec:lb} by ``converting'' a QRAC to a classical random access code (RAC) directly, using the compression scheme of~\cite{JRS03,harsha2007communication} and symmetrization ideas of~\cite{ALMO:QRAC}.

\begin{theorem}\label{th:QRACtoRAC}
     For all $\eta>0$, if an $(n,m,p)$-QRAC exists, then an $(n,(m+O(\log n + \log (1/\eta))),1-2p(1-p)-\eta)$-RAC exists. 
\end{theorem}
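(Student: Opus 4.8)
The plan is to run the Pretty Good Measurement for the \emph{uniform} ensemble $\{(2^{-n},\rho_x)\}_{x\in\01^n}$, compress its classical outcome down to roughly $m$ bits using the max-information compression of Theorem~\ref{thm:max_capacity}, and one-time-pad Alice's input so that the per-coordinate guarantee of the PGM (which a priori only holds on average over $x$) becomes a worst-case guarantee; a final application of Newman's theorem then removes the shared randomness at a cost of $O(\log n+\log(1/\eta))$ bits. Concretely, let $Q=\{Q_y\}_{y\in\01^n}$ be the PGM for the uniform ensemble, and let $E\colon\01^n\to\mathcal{P}_{\01^n}$ be the classical channel $E(x)(y)=\Tr(Q_y\rho_x)$. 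The proof of Theorem~\ref{th:QRACmostbitsavcase}, specialized to the uniform distribution, shows that for every $i\in[n]$ we have $\Pr_{x\sim U(\01^n),\,y\sim E(x)}[y_i=x_i]\ge p^2+(1-p)^2=1-2p(1-p)$. Moreover $\mathcal{C}_{max}(E)\le m$: for any input distribution $P_X$ form the cq-state $\rho_{XR}=\sum_x P_x\ketbra{x}{x}\otimes\rho_x$ on a classical register $X$ and an $m$-qubit register $R$; the channel output $Y$ is obtained from $R$ by the CPTP measurement map of $Q$, so Fact~\ref{fact:mono} gives $\imax(X:Y)\le\imax(X:R)\le\log|R|=m$. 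Applying Theorem~\ref{thm:max_capacity} to $E$ with error parameter $\eta/2$ yields a one-way protocol $\Pi$ with worst-case message length $\le m+O(\log\log(1/\eta))$ bits whose output $\Pi(x)$ satisfies $\Delta(\Pi(x),E(x))\le\eta/2$ for every $x$.

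Now build the RAC, first allowing shared randomness. Alice and Bob share a uniformly random one-time pad $r\in\01^n$ together with the randomness used by~$\Pi$. On input $x$, Alice sets $w:=x\oplus r$ and sends the message produced by $\Pi$ run on input $w$; to recover the $i$-th bit, Bob computes his output $\tilde y:=\Pi(w)$ and returns $\tilde y_i\oplus r_i$. For any fixed $x$ and $i$, as $r$ ranges uniformly the string $w=x\oplus r$ is uniformly distributed on $\01^n$, so
\[
\Pr[\tilde y_i=w_i]\ \ge\ \Pr_{w\sim U(\01^n),\,y\sim E(w)}[y_i=w_i]-\eta/2\ \ge\ 1-2p(1-p)-\eta/2 ,
\]
and since $w_i=x_i\oplus r_i$ the event $\tilde y_i=w_i$ coincides with the event that Bob outputs $x_i$. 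Thus this shared-randomness RAC uses classical messages of length $m+O(\log\log(1/\eta))$ and fails with probability at most $2p(1-p)+\eta/2$ on every pair $(x,i)$.

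Finally, remove the shared randomness by Newman's theorem. Bundle all of the shared randomness into one string $R$; for each of the $n2^n$ pairs $(x,i)$ the failure probability over $R$ is at most $2p(1-p)+\eta/2$. Drawing $T=O(n/\eta^2)$ independent samples $R_1,\dots,R_T$ of $R$ and combining a Chernoff bound with a union bound over the $n2^n$ pairs produces a fixed list $r_1,\dots,r_T$ such that, for every $(x,i)$, choosing $k\in[T]$ uniformly and running the shared-randomness RAC with setting $r_k$ fails with probability at most $2p(1-p)+\eta$. Alice now generates $k$ with private randomness and appends it to her message, adding $\log T=O(\log n+\log(1/\eta))$ bits; the setting $r_k$ (which includes the pad and the pre-shared samples used inside $\Pi$) is hard-wired into the protocol. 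Since the resulting encoding is classical (a diagonal state), this is an $(n,\,m+O(\log n+\log(1/\eta)),\,1-2p(1-p)-\eta)$-RAC.

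The step that needs care is the shared-randomness accounting: in addition to the one-time pad $r$, the compression protocol of Theorem~\ref{thm:max_capacity} hides a potentially large block of pre-shared samples $Z_1,Z_2,\dots$, and all of it must be eliminated at a combined cost of only $O(\log n+\log(1/\eta))$ bits. This works because Newman's theorem collapses an arbitrary amount of shared randomness to one of $T=\mathrm{poly}(n,1/\eta)$ frozen settings, indexable by $O(\log n+\log(1/\eta))$ bits. The other mildly subtle point is recognizing that it is the per-coordinate guarantee of the PGM (not merely the Hamming-distance bound of Theorem~\ref{th:QRACmostbitsworstcase}) that is needed, and that once the input is one-time-padded this per-coordinate bound holds for every $(x,i)$, so no minimax argument is required here.
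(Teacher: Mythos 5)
Your proof is correct and follows essentially the same route as the paper: PGM for the uniform ensemble, a one-time pad to turn the average-over-$x$ guarantee into a worst-case one, compression of the classical outcome via the max-information bound $\mathcal{C}_{max}(E)\leq m$ and Theorem~\ref{thm:max_capacity}, and Newman-style derandomization. The one (harmless) structural difference is that the paper's Protocol~\ref{prot:sym} additionally applies a random cyclic shift $d$ so that it can invoke only the Hamming-distance \emph{statement} of Theorem~\ref{th:QRACmostbitsavcase}, whereas you skip the shift by using the per-coordinate bound $\Pr[y_i=x_i]\geq p^2+(1-p)^2$ for each fixed $i$ established inside that theorem's proof (a fact the paper itself acknowledges in the open-problem section); both are valid, and yours is marginally simpler.
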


\begin{proof}
Let  be $x\mapsto\rho_x$ be the encoding map of the $(n,m,p)$-QRAC.

By Theorem~\ref{th:QRACmostbitsavcase},
for every distribution $P_X$, the PGM $\{Q_y\}_{y\in\01^n}$ for the ensemble $\{(P_x,\rho_x)\}_{x\in\01^n}$ satisfies
\[
\Exp_{x\sim P_X,\, y\sim {\rm Tr}(Q_y\rho_x)}[d_H(x,y)]\leq 2p(1-p)n.
\]


Consider the following ``symmetrized'' encoding-decoding protocol using shared randomness.
\newcommand{\sd}{\text{SHIFT}_d}

\SetAlgorithmName{Protocol}{Protocol}

\begin{algorithm}
    \caption{Symmetrized QRAC\label{prot:sym}}
\SetKwInOut{Enc}{Encoding}
\SetKwInOut{Dec}{Decoding}
\SetKwInOut{Init}{Initializing}
\Init{use shared randomness to sample a uniformly random shift $d\in [n]$ and a uniformly random string $r\in \01^n$. Define the shift function as 
     $\sd: \01^n\rightarrow \01^n, (\sd(x))_i =x_{((i+d-1 {\rm ~mod~} n)+1)}$}
\Enc{calculate $x'=\sd(x\oplus r)$. Encode $x$ into $\overline{\rho}_{x}\defeq \rho_{x'}$.}
\Dec{measure with the PGM $\{Q_y\}$ in Theorem~\ref{th:QRACmostbitsavcase} with $P_X=U(\01^n)$ and record the result as $y'$. Output $y=\text{SHIFT}_{-d}(y')\oplus r$. }
\end{algorithm}

For each $x\in\01^n$ and $i\in[n]$, the probability of $y_i$ correctly reproducing $x_i$ in Protocol~\ref{prot:sym} is  
\begin{align}\label{eq:symprotocol}
    \Pr[x_i=y_i]& = \Pr[x'_{(i-d)}=y'_{(i-d)}] \nn \\ 
    &=\Pr[\E_{x'\sim U(\01^n)}\E_{i\sim U([n])}\E_{y'\sim{\rm Tr}(Q_{y'}\rho_{x'})}[x'_i=y'_i]] \nn \\ 
    &=\Exp_{x'\sim U(\01^n),\, y'\sim {\rm Tr}(Q_{y'}\rho_{x'})}\left[1-\frac{1}{n}d_H(x',y')\right]\geq 1-2p(1-p), 
\end{align}
where in the second equality we used that both $x'$ and $i-d$ are uniformly random, and in the inequality we used Theorem~\ref{th:QRACmostbitsavcase} with $P_X=U(\01^n)$. 




  Denote the effective encoding map and decoding measurement in Protocol~\ref{prot:sym} as $x\mapsto\overline{\rho}_x$ and $\{\overline{Q}_y\}$, respectively. 
  These depend on the same shared random $r,d$. Consider the RAC where the encoded message is just the random variable corresponding to  the classical string $y$ in Protocol~\ref{prot:sym}. By Equation~\eqref{eq:symprotocol}, this RAC can be decoded with success probability $1-2p(1-p)$ by guessing $x_i=y_i$.
 This RAC has bad encoded length since the length of $y$ is the same as the length of $x$, namely $n$ bits. However, noting that $y$ comes from measuring an $m$-qubit quantum state, we know it has ``low correlation'' with $x$. More specifically, consider the channel $E$ that produces $y$ by measuring $\overline{\rho}_{x}$ with the effective measurement $\overline{Q}$ of Protocol~\ref{prot:sym}:
$$
E(x)(y)= \Tr (\overline{Q}_y \overline{\rho}_x).
$$
Then we can upper bound the max channel capacity of $E$ by Fact~\ref{fact:mono}: 
$$
\mathcal{C}_{max}(E) = \max_{P_X=\{P_x\}} \imax(X:Y)  \leq \max_{P_X=\{P_x\}} \imax(X:\overline{\rho})  \leq m,
$$
where in the third expression, $X$ and $\overline\rho$ are the two registers of the state $\sum_x P_x\ketbra{x}{x} \otimes \overline{\rho}_x$. Therefore, we can use the compression protocol of Theorem~\ref{thm:max_capacity} to compress $y$ into another short message $\tilde{y}$ with  worst-case length $m+O(\log\log (1/\eta))$ using shared randomness between the encoder and the decoder, with an $\eta/2$ loss in the success probability.

Lastly, we will  use the probabilistic method to get rid of the shared randomness, at the expense of another loss of at most $\eta/2$ in the success probability, and $O(\log(n)+\log(1/\eta))$ additional bits in the length of the message.%
\footnote{This is analogous to the proof of Newman's theorem that shows how to get rid of shared randomness in communication complexity~\cite{newman:random,kushilevitz&nisan:cc}.}
For $s=(r,d)$, $x\in\01^n$ and $i\in[n]$, let $P_{sxi}$ denote the probability that $y_i\neq x_i$ in the protocol, for those $r,d,x,i$.
The error probability of the protocol for decoding $x_i$ from this particular $x$, is $\Exp_{s\sim U}[P_{sxi}]$, where $U$ is the uniform distribution over all possible $s=(r,d)$.

Pick a set $S$ (whose size will be determined soon) of $s$'s uniformly at random. For each pair $x,i$, let $B_{xi}$ denote the ``bad event'' that sampling an $s$ uniformly from the set $S$ (rather than from the uniform distribution $U$) increases the error probability of the protocol by more than $\eta/2$ for this particular $x,i$:
\[
B_{xi}:~~~~ 
\Exp_{s\sim S}[P_{sxi}]> \Exp_{s\sim U}[P_{sxi}]+\eta/2.
\] 
Note that $\Exp_{s\sim S}[P_{sxi}]$ is the average over $|S|$ many iid random variables in $[0,1]$, each of expectation $\Exp_{s\sim U}[P_{sxi}]$. By a Chernoff bound, the probability (over the choice of $S$) of the above event $B_{xi}$ is then exponentially small in $\eta^2|S|$.
If we choose $|S|=O(n/\eta^2)$ with a sufficiently large constant in the $O(\cdot)$, then $\Pr[B_{xi}]<1/(2^n n)$.
By a union bound over all $2^n n$ pairs $(x,i)$, we have $\Pr[\exists (x,i)~s.t.~B_{xi}]<1$. Hence there \emph{exists} a choice of set $S$ such that $B_{xi}$ is false \emph{for all pairs $(x,i)$ simultaneously}.
Fix such an~$S$.
We now modify the message by sampling an $s\in S$ and appending to it the compressed random variable $\tilde{y}$ that the protocol produces with this choice of $s=(r,d)$. 
This modification of the protocol increases the  error probability by at most $\eta/2$, because $B_{xi}$ is false for all $x,i$ thanks to our choice of $S$. 
In total, the error probability compared to the original QRAC went up by at most $\eta/2+\eta/2=\eta$. We do not need shared randomness between encoder and decoder anymore, because the random variable $s$ is now part of the message.
It increases the length of the message by the number of bits needed to write down~$s$ (knowing that it comes from the fixed set $S$), which is $\ceil{\log_2|S|}=O(\log(n)+\log(1/\eta))$ bits.
\end{proof}






\section{Open problem}

An interesting open problem related to the ``obfuscation'' of QRACs is the possibility to strengthen Theorem~\ref{th:QRACmostbitsworstcase} to get a $y$ such that $y_i=x_i$ with error probability $2p(1-p)$ for every QRAC, every $x$, and every $i$. The motivation is that the prediction of $x$ we get from Theorem~\ref{th:QRACmostbitsworstcase} is weaker than what we can get from a RAC with error probability $2p(1-p)$. In particular, although the strings $y$ and $x$ have low Hamming distance to each other, there may exist a particular index $i$ such that $x_i$ never equals~$y_i$, while in a proper RAC, it holds \emph{for every $i$} that $x_i=y_i$ with good probability. 



\begin{conjecture}\label{con:QRACworst_y_open}
    Let $x\mapsto \rho_x$ be an $(n,m,p)$-QRAC.
There exists a POVM $\{Q_y\}_{y\in\01^n}$ such that for every $x\in\01^n$ and $i\in [n]$:
\[
\Exp_{y\sim{\rm Tr}(Q_y\rho_x)}\left[\Pr[x_i \neq y_i]\right]\leq 2p(1-p),
\]
where the expectation is taken over the distribution of the measurement outcome~$y$.
\end{conjecture}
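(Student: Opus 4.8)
\textbf{Proof proposal for Conjecture~\ref{con:QRACworst_y_open}.}

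The plan is to mimic the minimax argument of Section~\ref{sec:worstcase}, but with a richer payoff function that penalizes, for each coordinate~$i$, the event $x_i\neq y_i$ \emph{per coordinate} rather than only the aggregate Hamming distance. Concretely, I would introduce an auxiliary probability distribution $\mu$ over the index set $[n]$, in addition to the distribution $P_X$ over inputs, and let the ``maximizing player'' choose the pair $(P_X,\mu)$ while the ``minimizing player'' chooses a POVM $Q$. Define the payoff
\[
g(P_X,\mu,Q)=\Exp_{i\sim\mu}\Exp_{x\sim P_X}\Exp_{y\sim{\rm Tr}(Q_y\rho_x)}\big[\,\mathbf{1}[x_i\neq y_i]\,\big],
\]
which is linear in $P_X$, linear in $\mu$ (hence linear on the compact convex set of joint distributions on $[n]\times\01^n$), and linear in $Q$ by the same computation as in Section~\ref{sec:worstcase}. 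The first step is to check, via Theorem~\ref{th:QRACmostbitsavcase}, that for every fixed $(P_X,\mu)$ there is a $Q$ (the PGM for the ensemble $\{(P_x,\rho_x)\}$, which does not depend on $\mu$) with $g(P_X,\mu,Q)\le 2p(1-p)$; this holds because $\Pr_{x\sim P_X,y}[x_i\neq y_i]\le 2p(1-p)$ for \emph{each} $i$ separately, as shown inside that proof, so averaging over $i\sim\mu$ preserves the bound. Hence the left-hand side $\max_{P_X,\mu}\min_Q g\le 2p(1-p)$, and by the minimax theorem (Eq.~\eqref{eq:minimax}, applied to the compact convex sets of joint distributions and of $2^n$-outcome POVMs) the right-hand side $\min_Q\max_{P_X,\mu} g$ is also $\le 2p(1-p)$.

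The second step is to unpack what $\min_Q\max_{P_X,\mu} g\le 2p(1-p)$ says. Fix the optimal $Q$ achieving the outer minimum. Since the inner maximization over $(P_X,\mu)$ can be taken over product point masses — choose $P_X$ supported on a single worst-case $x$ and $\mu$ supported on a single worst-case $i$ — we get that for \emph{every} $x\in\01^n$ and \emph{every} $i\in[n]$,
\[
\Exp_{y\sim{\rm Tr}(Q_y\rho_x)}\big[\mathbf{1}[x_i\neq y_i]\big]=\Pr_{y\sim{\rm Tr}(Q_y\rho_x)}[x_i\neq y_i]\le 2p(1-p),
\]
which is exactly the statement of the conjecture (the ``$\Exp_y[\Pr[x_i\neq y_i]]$'' in the conjecture is just $\Pr_y[x_i\neq y_i]$ written with the measurement randomness spelled out, since $y_i$ is determined by $y$). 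So the logical skeleton is: enlarge the maximizer's strategy space to include a choice of coordinate, apply minimax, then specialize both coordinates of the maximizer back to point masses.

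The step I expect to be the obstacle is precisely the direction of the minimax swap that I am \emph{not} allowed to do: this argument only proves the conjecture if the PGM-based upper bound from Theorem~\ref{th:QRACmostbitsavcase} truly holds coordinate-by-coordinate for every distribution $P_X$ — which it does — but the resulting worst-case POVM $Q$ is obtained non-constructively from minimax and need not be any PGM, and in particular need not be the same $Q$ that Section~\ref{sec:worstcase} produces. One should double-check that nothing in the minimax setup forces $\mu$ and $P_X$ to be correlated in a way that weakens the bound; since the maximizer is free to choose them jointly, allowing correlation only \emph{helps} the maximizer, yet the per-coordinate bound $2p(1-p)$ from Theorem~\ref{th:QRACmostbitsavcase} holds uniformly in $i$ and in $P_X$, so the correlation buys nothing and the bound survives. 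If this reasoning is correct, the conjecture is in fact a theorem provable by exactly this routine extension of the Section~\ref{sec:worstcase} argument; the subtlety to watch is whether ``worst-case over $x$ \emph{and} $i$ simultaneously'' genuinely follows from minimax, i.e.\ whether the joint-distribution polytope's extreme points are the product point masses $(\delta_x,\delta_i)$ — they are, since it is a simplex — which is the one place a careless argument could go wrong.
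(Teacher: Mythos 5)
This statement is left as an \emph{open conjecture} in the paper, and your argument does not close it: the gap sits exactly at the convexity structure needed for the minimax swap. You have two choices for the maximizer's strategy space. If it is the product set of pairs $(P_X,\mu)$, the payoff $g$ is \emph{bilinear} in $(P_X,\mu)$, and a bilinear function on a product of simplices is in general neither concave nor quasi-concave in the joint variable (e.g.\ $P(1)\mu(1)+P(2)\mu(2)$ equals $1$ at the two ``matched'' point masses but $1/2$ at their midpoint, so its superlevel sets are not convex); hence the von Neumann/Sion minimax theorem does not apply in that formulation. To get genuine linearity you must, as you note, pass to the simplex of \emph{joint} distributions $\nu$ on $[n]\times\01^n$. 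But then the left-hand side of the minimax identity is $\max_{\nu}\min_Q g(\nu,Q)$, and since $\nu\mapsto\min_Q g(\nu,Q)$ is concave (an infimum of linear functions), its maximum over the simplex is \emph{not} attained at extreme points; you must bound it for every joint $\nu$, including correlated ones. Your Step 1 only covers product $\nu=\mu\times P_X$: for correlated $\nu$ the quantity to bound is $\sum_i \nu(i)\Pr_{x\sim\nu(\cdot\mid i),\,y}[x_i\neq y_i]$, where each coordinate $i$ is weighted by a \emph{different} conditional input distribution $\nu(\cdot\mid i)$, and Theorem~\ref{th:QRACmostbitsavcase} (applied to the PGM for the $x$-marginal of $\nu$) controls $\Pr_{x\sim\text{marginal},\,y}[x_i\neq y_i]$ but says nothing about $\Pr_{x\sim\nu(\cdot\mid i),\,y}[x_i\neq y_i]$. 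The adversary can correlate $i$ with precisely those inputs on which any proposed measurement predicts $x_i$ worst, so the inequality $\max_{\text{product}}\min_Q g\leq 2p(1-p)$ that you actually proved does not transfer to $\max_{\text{joint}}\min_Q g$.

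Worse, the missing step is equivalent to the conjecture itself: by the same minimax theorem (now legitimately applied on the joint simplex), $\max_{\nu}\min_Q g(\nu,Q)=\min_Q\max_{x,i}\Pr_{y}[x_i\neq y_i]$, so establishing $\min_Q g(\nu,Q)\leq 2p(1-p)$ for all correlated $\nu$ is exactly as hard as the statement you are trying to prove --- at that point the argument is circular. This is the content of the authors' remark following the conjecture that the worst-case-over-$i$ success probability is not linear in $P$ and $Q$: linearizing via an auxiliary distribution $\mu$ over coordinates is the natural fix, but it merely relocates the difficulty into the correlated-distribution case rather than resolving it.
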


Note that although  Theorem~\ref{th:QRACmostbitsavcase} actually works for every $i$, one cannot directly apply the minimax theorem as in Theorem~\ref{th:QRACmostbitsworstcase} to prove Conjecture~\ref{con:QRACworst_y_open}, because the lowest success probability over all $i$ is not a linear function in $P$ and $Q$.




\paragraph{Acknowledgments.}
We thank Rahul Jain for helpful comments on a draft of this paper.

\bibliographystyle{alpha}
\bibliography{qc}

\end{document}